\theoremstyle{plain}
\newtheorem{lemma}{Lemma}
\newtheorem{theorem}{Theorem}
\newtheorem{axiom}{Axiom}
\newtheorem{definition}{Definition}
\theoremstyle{definition}
\newtheorem*{remark}{Remark}
\newcommand{\argmax}{\operatornamewithlimits{argmax}}
\begin{document}
%
% paper title
% can use linebreaks \\ within to get better formatting as desired
\title{On Optimality of Myopic Policy for Restless Multi-armed Bandit Problem with Non i.i.d. Arms and Imperfect Detection}
\author{Kehao~Wang \qquad Lin~Chen \qquad Quan~Liu \qquad Khaldoun~Al~Agha
\IEEEcompsocitemizethanks{\IEEEcompsocthanksitem K.~Wang, L.~Chen and K. Al~Agha are with the Laboratoire de Recherche en Informatique (LRI), Department of Computer Science, the University of Paris-Sud XI, 91405 Orsay, France (e-mail: \{Kehao.Wang, Lin.Chen, Khaldoun.Alagha\}@lri.fr). K.~Wang and Q.~Liu is with the school of Information Engineering, Wuhan University of Technology, 430070 Hubei, China (e-mail: \{Kehao.wang, Quan.Liu\}@whut.edu.cn).}}

% for Computer Society papers, we must declare the abstract and index terms
% PRIOR to the title within the \IEEEcompsoctitleabstractindextext IEEEtran
% command as these need to go into the title area created by \maketitle.
\IEEEcompsoctitleabstractindextext{%
\begin{abstract}
We consider the channel access problem in a multi-channel opportunistic communication system with imperfect channel sensing, where the state of each channel evolves as a non independent and identically distributed Markov process. This problem can be cast into a restless multi-armed bandit (RMAB) problem that is intractable for its exponential computation complexity. A natural alternative is to consider the easily implementable myopic policy that maximizes the immediate reward but ignores the impact of the current strategy on the future reward. In particular, we analyze a family of generic and practically important functions, termed as $g$-regular functions characterized by three axioms, and establish a set of closed-form structural conditions for the optimality of myopic policy. %The axiomatic analysis also illuminates important engineering implications of the myopic policy including the intrinsic tradeoff between exploration and exploitation. %A case-study is then presented to illustrate the application of the derived results in analyzing a class of RMAB problems arising from multi-channel opportunistic access.
\end{abstract}

% Note that keywords are not normally used for peerreview papers.
\begin{IEEEkeywords}
Restless multi-armed bandit (RMAB), myopic policy, opportunistic spectrum access (OSA), Imperfect Detection
\end{IEEEkeywords}}

% make the title area
\maketitle

% To allow for easy dual compilation without having to reenter the
% abstract/keywords data, the \IEEEcompsoctitleabstractindextext text will
% not be used in maketitle, but will appear (i.e., to be "transported")
% here as \IEEEdisplaynotcompsoctitleabstractindextext when compsoc mode
% is not selected <OR> if conference mode is selected - because compsoc
% conference papers position the abstract like regular (non-compsoc)
% papers do!
\IEEEdisplaynotcompsoctitleabstractindextext
% \IEEEdisplaynotcompsoctitleabstractindextext has no effect when using
% compsoc under a non-conference mode.

% For peer review papers, you can put extra information on the cover
% page as needed:
% \ifCLASSOPTIONpeerreview
% \begin{center} \bfseries EDICS Category: 3-BBND \end{center}
% \fi
%
% For peerreview papers, this IEEEtran command inserts a page break and
% creates the second title. It will be ignored for other modes.
\IEEEpeerreviewmaketitle

\section{Introduction}
\label{sec:intro}

We consider the restless multi-armed bandit (RMAB) problem in the context of opportunistic multi-channel communication system in which a user has access to multiple
channels, but is limited to sense and transmit only on a subset of them at a time. The fundamental problem is how the user can exploit past observations and the knowledge of the stochastic properties of the channels to maximize its utility (e.g., expected throughput) by switching channels opportunistically.

The RMAB problem, although well defined, is proved to be PSPACH-Hard to solve~\emph{et al.} in~\cite{Papadimitriou99}, and very little result is reported on the structure of the optimal policy due to its high complexity. Recently, an alternative approach has captured extensive research attention which consists of seeking the myopic policy (also termed as greedy policy) which maximizes the expected immediate reward while ignoring the impact of the current action on the future. Zhao~\emph{et al.}~\cite{Qzhao08} established the structure of the myopic sensing policy, analyzed the performance, and partly obtained the optimality for the case of i.i.d. channels. Ahmad and Liu~\emph{et al.}~\cite{Ahmad09b} derived the optimality of the myopic sensing policy for the positively correlated i.i.d. channels when the user is limited to access one channel (i.e., $k=1$) each time, and further extended the optimality to the case of sensing multiple i.i.d. channels ($k>1$)~\cite{Ahmad09}.
In our previous work~\cite{Wang11TPS} we extended i.i.d. channels~\cite{Ahmad09b} to non i.i.d. ones, and focused on a family of generic and important utility functions, termed as \emph{regular} function, and derived closed-form conditions under which the myopic sensing policy is ensured to be optimal.
For the imperfect sensing channel model, Liu and Zhao~\emph{et al.}~\cite{Kliu10} proved the optimality of the myopic policy for the case of two channels with a particular utility function and conjectured it for arbitrary $N$. In~\cite{Wang11TCOM}, we extended the optimality of myopic policy for i.i.d. channels from the perfect sensing to the imperfect sensing, and as a consequence, derived closed-form conditions to guarantee the optimality of the myopic sensing policy for arbitrary $N$ and for regular function.

%The structure and the optimality are a natural research focus in the literature on the myopic policy, among which our previous work~\cite{Wang11TPS,Wang11TCOM} developed an axiomatic framework on the optimality of the myopic policy. More specifically, three axioms are developed to characterize a family of generic functions termed as regular functions and thus, based on the axioms, the closed-form conditions are established to guarantee the optimality of the myopic policy.

Our study presented in this paper builds upon and extends our earlier work~\cite{Wang11TPS,Wang11TCOM}. Under the assumption of imperfect channel observation, we perform an analytical study on the optimality of the myopic policy for the considered RMAB problem. The contribution of this paper, compared with~\cite{Wang11TPS,Wang11TCOM}, is two-fold:
\begin{itemize}
\item We further generalize the third axiom in~\cite{Wang11TPS} to cover a much larger class of reward functions including the logarithmic and exponential functions. The conditions of the optimality are derived in the more general with the case in~\cite{Wang11TPS} being a special subset.
\item We derive the optimality condition of the myopic policy with imperfect channel observation and non i.i.d. channels. The main technical obstacle we overcome is that in the non-perfect sensing case, the belief value of a channel depends not only on the evolution itself, but also on the observation outcome, which leads to indeterministic transition and nonlinear propagation of the belief vector.
\end{itemize}

It is worth noting that despite the vital importance, very few work has been done on the impact of imperfect observation on the performance of the myopic policy. To our knowledge,~\cite{Kliu10}
and~\cite{Wang11TCOM} are
the only analysis pertinent to our study in this paper. They both focus on i.i.d. channels, while the analysis in this paper levitates this assumption by considering the generic heterogeneous case which requires an original analysis on the optimality, as detailed later in the paper. Table 1 summarizes the related work on the myopic policy and illustrates the work presented in this paper within the context.

\begin{table}[ht]
\begin{center}
\caption{Summary of related work on myopic policy of RMAB problem}
\begin{tabular}{|l|l|l|}
\hline
& i.i.d arms & non i.i.d. arms \\
\hline
Perfect observation & \cite{Qzhao08},~\cite{Ahmad09b,Ahmad09} & \cite{Wang11TPS} \\
\hline
Imperfect observation & \cite{Kliu10}, \cite{Wang11TCOM} & this paper \\
\hline
\end{tabular}
\label{table:related_work}
\end{center}
\end{table}

The rest of the paper is organized as follows: Our model is formulated in Section~\ref{sec:model} and then the $g$-regular function is introduced in Section~\ref{sec:axiom}. Section~\ref{sec:analysis} studies the optimality of the myopic sensing policy. Finally, the paper is concluded by Section~\ref{sec:conclusion}.

\section{System Model and Problem Formulation}
\label{sec:model}

We consider the multi-channel opportunistic communication system where the user is allowed to sense only $k$ ($1\le k\le N$) of the $N$ channels at each slot $t$. The transmission probabilities of channel $i$ are $p^i_{rs},r,s=0,1$. We assume $p^i_{11}>p^i_{01}, 1\leq i \leq N$. We denote the set of channels chosen by the user at slot $t$ by ${\cal A}(t)$ where ${\cal A}(t)\subseteq {\cal N}$ and $|{\cal A}(t)|=k$. We are interested in the imperfect sensing scenario where channel sensing is subject to errors, i.e., a good channel may be sensed as bad one and vice versa. Let $\mathbf{S}(t)\triangleq[S_1(t),\cdots,S_N(t)]$ denote the channel state vector where $S_i(t)\in\{0,1\}$ is the state of channel $i$ in slot $t$ and let $\mathbf{S'}(t)\triangleq\{S'_i(t), i\in{\mathcal{A}(t)}\}$ denote the sensing outcome vector where $S_i'(t)=0$ ($1$) means that the channel $i$ is sensed bad (good) in slot $t$. Using such notation, the performance of channel state detection is characterized by two system parameters: the probability of false alarm $\epsilon_i(t)$ and the probability of miss detection $\delta_i(t)$, formally defined as follows:
\begin{eqnarray*}
  \epsilon_i(t) \triangleq \text{Pr}\{S_i'(t)=0 | S_i(t)=1\}, \\
  \delta_i(t)   \triangleq \text{Pr}\{S_i'(t)=1 | S_i(t)=0\}.
\end{eqnarray*}
In our analysis, we consider the case where $\epsilon_i(t)$ and $\delta_i(t)$ are independent w.r.t. $t$ and $i$. More specifically, we defined $\epsilon$ and $\delta$ as the system-wide false alarm rate and miss detection rate.
We assume that the user only transmits over the channel sensed to be good.

%A separation principle has been established in~\cite{Ychen08} which decouples the design of the channel state detector and the access policy from that of the channel sensing policy such that the optimal sensing policy can be casted into an RMAB problem for independent channels.

We also assume that when the receiver successfully receives a packet from a channel, it sends an acknowledgement to the transmitter over the same channel at the end of the slot. The absence of an ACK (NACK) signifies that the transmitter does not transmit over this channel or transmitted but the channel is busy in this slot. We assume that acknowledgement are received without error since acknowledgements are always transmitted over idle channels~\cite{Kliu10}.

Obviously, by sensing only $k$ out of $N$ channels, the user cannot observe the state information of the whole system. Hence, the user has to infer the channel states from its past decision and observation history so as to make its future decision. To this end, we define the \emph{channel state belief vector} (hereinafter referred to as \emph{belief vector} for briefness) $\Omega(t)\triangleq\{\omega_i(t), i\in{\cal N}\}$, where $0\le \omega_i(t)\le 1$ is the conditional probability that channel $i$ is in state good (i.e., $S_i(t)=1$) at slot $t$ given all past states, actions and observations. In order to ensure that the user and its intended receiver tune to the same channel in each slot, channel selections should be based on common observations $\{ 0~\text{(NACk)}, 1~\text{(ACK)} \}^k$ rather than the detection outcomes at the transmitter.
%\footnote{The initial belief $\omega_i(1)$ can be set to $\frac{p_{01}}{p_{01}+1-p_{11}}$ if no information about the initial system state is available.}.
%Throughout the paper $\omega_{i}(t)$ will be referred to as the information state of channel $i$ at time $t$, or simply the channel probability of $i$ at time $t$.
Due to the Markovian nature of the channel model, given the action $\mathcal{A}(t)$ and the observations $\{ ACK_i(t)\in \{0,1\}:i\in \mathcal{A}(t) \}$, the belief vector can be updated recursively using Bayes Rule as shown in~\eqref{eq:belief_update_err_asym}.
\begin{equation}
\omega_i(t+1)=
\begin{cases}
p_{11}^i, & i\in {\cal A}(t), ACK_i(t)=1 \\
\tau_i(\varphi(\omega_i(t))), & i\in {\cal A}(t), ACK_i(t)=0 \\
\tau_i(\omega_i(t)), & i\not\in {\cal A}(t)
\end{cases},
\label{eq:belief_update_err_asym}
\end{equation}
Note that the belief update under $ACK_i(t)=0$ results from the fact that the receiver cannot distinguish a failed transmission (i.e., collides with the primary user with probability $\delta(1-\omega_i(t))$) from no transmission (with probability $\epsilon\omega_i(t)+(1-\delta)(1-\omega_i(t))$)~\cite{Kliu10}. For convenience, we introduce two operators
$\varphi(\omega_i)=\frac{\epsilon \omega_{i}(t)}{\epsilon \omega_{i}(t) + 1 - \omega_{i}(t)}$ and
\begin{equation}
\tau_i(\omega_i(t))\triangleq\omega_i(t)\cdot p^i_{11}+(1-\omega_i(t))\cdot p^i_{01}.
\label{eq:tau_err_sym}
\end{equation}

\begin{remark}
%By treating the belief value of each channel as the state of each arm of a bandit, the user's optimization problem can be cast into a restless multi-armed bandit problem.
We would like to emphasize that in contrast to the perfect sensing case where $\omega_i(t+1)$ is a linear function of $\omega_i(t)$ whether $i$ is sensed or not, in the imperfect sensing case, the mapping from $\omega_i(t)$ to $\omega_i(t+1)$ is no longer linear due to the sensing error (cf. the second line of equation~\eqref{eq:belief_update_err_asym}). In addition, Papadimitriou et al ~\cite{Papadimitriou99} shows that for $N$ arms, even when the active transition matrix and the passive one are deterministic transitions (e.g. either 0 or 1), computing the optimal policy is PSPACE-hard, and their proof also shows that deciding the optimal reward is non-zero is also PSPACE-hard, hence ruling out any \emph{approximation algorithm} as well. Unfortunately, the considered problem in this paper just is the case without any approximation algorithm because the belief value update of a channel depends not only on the channel evolution itself, but also on the observation outcome, i.e., $\omega_i(t+1)=\tau_i(\omega_i(t))$ for $i\notin {\cal A}(t)$ and $\omega_i(t+1)=\tau_i(\varphi(\omega_i(t)))$ for $i\in {\cal A}(t), ACK_i(t)=0$. Therefore, an original study on the optimality of the myopic sensing policy is especially required since these aforementioned differences make the analysis for the perfect sensing case no more applicable in the imperfect sensing case. It should also be noted that the perfect sensing case can be regarded as a degenerated case with $\epsilon=\delta=0$.
\end{remark}

A sensing policy $\pi$ specifies a sequence of functions $\pi=[\pi_1,\pi_2,\cdots,\pi_T]$ where $\pi_t$ maps the belief vector $\Omega(t)$ to the action (i.e., the set of channels to sense) ${\mathcal A}(t)$ in each slot $t$: $\pi_t: \ \Omega(t)\rightarrow{\mathcal A}(t), |{\mathcal A}(t)|=k$.

Given the imperfect sensing context, we are interested in the user's optimization problem to find the optimal sensing policy $\pi^*$ that maximizes the expected total discounted reward over a finite horizon:
\begin{equation}
\pi^*=\argmax_{\pi} \mathbb{E}\left.\left[\sum^{T}_{t=1} \beta^{t-1} R(\pi_t(\Omega(t)))\right|\Omega(1)\right]
\label{eq:pb_formulation_err_sym}
\end{equation}
where $R(\pi_t(\Omega(t)))$ is the reward collected in slot $t$ under the sensing policy $\pi_t$ with the initial belief vector $\Omega(1)$\footnote{If no information on the initial system state is available, each entry of $\Omega(1)$ can be set to the stationary distribution $\omega^{i}_0=\frac{p^{i}_{01}}{1+p^{i}_{01}-p^{i}_{11}}$, $1\leq i \leq N$.}, $0\le \beta\le 1$ is the discounted factor characterizing the feature that the future rewards are less valuable than the immediate reward. By treating the belief value of each channel as the state of each arm of a bandit, the user's optimization problem can be cast into a restless multi-armed bandit problem.

In this paper, we focus on the myopic sensing policy which is easy to compute and implement that maximizes the immediate reward, formally defined as follows:

\begin{definition}[Myopic Sensing Policy]
Let $F(\Omega_A(t))\triangleq \mathbb{E}[R(\pi_t(\Omega(t)))]$ denote the expected immediate reward obtained in slot $t$ under the sensing policy $\pi_t$, the myopic sensing policy $\mathcal{\widetilde{A}}(t)$, consists of sensing the $k$ channels that maximizes $F(\Omega_A(t))$, i.e., $\mathcal{\widetilde{A}}(t)\triangleq \argmax_{\mathcal{A}(t)\subseteq \mathcal{N}} F(\Omega_A(t))$.
\end{definition}

In the sequel analysis, we establish closed-form conditions under which the myopic sensing policy is guaranteed to be optimal. %For sake of completeness, some results are quoted from~\cite{Wang11TPS} and extended to the imperfect sensing case.
Before ending this section, we state some structural properties of $\tau_i(\omega_i(t))$ and $\varphi(\omega_i(t))$ that are useful in the subsequent proofs.
\begin{lemma}
For any positively correlated channel $i$ (i.e., $p_{01}^i<p_{11}^i$), the following structural properties of $\tau_i(\omega_i(t))$ hold:
\begin{itemize}
\item $\tau_i(\omega_i(t))$ is monotonically increasing in $\omega_i(t)$;
\item $p_{01}^i\le \tau_i(\omega_i(t))\le p_{11}^i$, $\forall \ 0\le \omega_i(t)\le 1$.
\end{itemize}
\label{lemma:property_tau_err_sym}
\end{lemma}
\begin{proof}
Noticing that $\tau_i(\omega_i(t))$ can be written as $\tau_i(\omega_i(t))=(p_{11}^i-p_{01}^i)\omega_i(t)+p_{01}^i$, Lemma~\ref{lemma:property_tau_err_sym} holds straightforwardly.
\end{proof}

\begin{lemma}
$\varphi(\omega_i(t))$ monotonically increases with $\omega_i(t)$ when $0 \le \epsilon <1$.
\label{lemma:property_phi_err_sym}
\end{lemma}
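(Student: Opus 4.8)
The plan is to regard $\varphi$ as a scalar function of the single real variable $\omega_i(t)\in[0,1]$ and to establish monotonicity by a direct derivative computation. First I would rewrite the denominator in the compact affine form $\epsilon\omega_i(t)+1-\omega_i(t)=1-(1-\epsilon)\omega_i(t)$, so that
\[
\varphi(\omega_i(t))=\frac{\epsilon\,\omega_i(t)}{1-(1-\epsilon)\omega_i(t)}.
\]
This shape makes the differentiation transparent and also exposes the positivity of the denominator, which is the only fact that will need any care.

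Next I would differentiate using the quotient rule. Writing $D(\omega_i(t))=1-(1-\epsilon)\omega_i(t)$ so that $D'=-(1-\epsilon)$, the numerator of the derivative collapses: the two terms proportional to $(1-\epsilon)\omega_i(t)$ cancel, leaving
\[
\varphi'(\omega_i(t))=\frac{\epsilon\,D(\omega_i(t))-\epsilon\,\omega_i(t)\,D'(\omega_i(t))}{D(\omega_i(t))^2}=\frac{\epsilon}{\bigl(1-(1-\epsilon)\omega_i(t)\bigr)^{2}}.
\]
Since $0\le\epsilon<1$, the numerator satisfies $\epsilon\ge0$, so it remains only to argue that $D(\omega_i(t))>0$ on the whole interval. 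As $D$ is affine in $\omega_i(t)$ with $D(0)=1$ and $D(1)=\epsilon$, its minimum over $[0,1]$ is attained at an endpoint; for $\epsilon>0$ both endpoint values are strictly positive, hence $D>0$ and $\varphi'>0$ throughout, giving strict monotonic increase.

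The remaining step is to dispose of the degenerate boundary behaviour, which is where the only subtlety lies (there is no genuine obstacle otherwise). The delicate point is $\epsilon=0$: there $\varphi(\omega_i(t))=0$ for every $\omega_i(t)<1$, i.e.\ $\varphi$ is constant and the closed form is indeterminate at $\omega_i(t)=1$; this corresponds precisely to the perfect-sensing limit and is monotone in the weak sense, so it is consistent with the statement. The entire content of the lemma is thus the cancellation that yields the clean numerator $\epsilon$ together with the sign check on $D$. If one prefers to avoid calculus altogether, an equivalent argument is to observe that $1/\varphi(\omega_i(t))=\tfrac{1}{\epsilon}\bigl(\tfrac{1}{\omega_i(t)}-(1-\epsilon)\bigr)$ is strictly decreasing in $\omega_i(t)$ for $\epsilon>0$, whence $\varphi$, being positive, is strictly increasing.
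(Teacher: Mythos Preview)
Your proof is correct and follows essentially the same approach as the paper, which simply states that the monotonicity follows straightforwardly from the formula $\varphi(\omega_i)=\frac{\epsilon \omega_{i}(t)}{\epsilon \omega_{i}(t) + 1 - \omega_{i}(t)}$ without elaborating. Your explicit derivative computation, the discussion of the sign of the denominator, and the treatment of the degenerate case $\epsilon=0$ merely spell out what the paper leaves implicit.
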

\begin{proof}
Noticing that $\varphi(\omega_i)=\frac{\epsilon \omega_{i}(t)}{\epsilon \omega_{i}(t) + 1 - \omega_{i}(t)}$, Lemma~\ref{lemma:property_phi_err_sym} follows straightforwardly.
\end{proof}

\section{Axioms}
\label{sec:axiom}

%For the convenience of presentation, we sort the elements of the believe vector at the beginning of the slot $t$ $\Omega(t)=[\omega_1(t), \omega_2(t), \cdots, \omega_N(t)]$ in the descending order such that $\omega_1\ge\omega_2\ge\cdots\ge\omega_N$\footnote{For the simplicity of presentation, by slightly abusing the notations without introducing ambiguity, we drop the time slot index of $\omega_i(t)$.}. Under this notation, the myopic sensing policy consists of sensing the channels $1$ to $k$. We are now ready to introduce the following axioms .

This section defines three axioms characterizing a family of generic and practically important functions referred to as \emph{$g$-regular} functions, which serve as a basis for the further analysis on the structure and the optimality of the myopic sensing policy. Without ambiguity, we drop the time index of $\omega_i(t)$, and abuse $\omega_i(t)$ and $\omega_i$ alternatively.%\footnote{Throughout this section, for the convenience of presentation, we sort the elements of the believe vector $\Omega(t)=[\omega_1(t), \omega_2(t), \cdots, \omega_N(t)]$ in the descending order for each slot $t$ such that ${\cal A}= \{1, 2, \cdots, k\}$ (i.e., the user senses channel $1$ to channel $k$) and let $\Omega_A\triangleq \{\omega_1, \omega_2, \cdots, \omega_k\}$.\footnote{For presentation simplicity, by slightly abusing the notations without introducing ambiguity, we drop the time slot index $t$.}. The three axioms derived in the following characterize a generic function $f$ defined on $\Omega_A$.}.

\begin{axiom}[Symmetry~\cite{Wang11TPS}]
\label{axiom:symmetry}
A function $f(\Omega_A): [0,1]^k\rightarrow\mathbb{R}$ is symmetrical if for any two distinct channels $i$ and $j$, it holds that
\begin{equation*}
f(\omega_{1}, \cdots, \omega_{i}, \cdots, \omega_{j}, \cdots, \omega_{k})=f(\omega_{1}, \cdots, \omega_{j}, \cdots, \omega_{i}, \cdots, \omega_{k}).
\end{equation*}
\end{axiom}

\begin{axiom}[Monotonicity~\cite{Wang11TPS}]
\label{axiom:monotonicity}
A function $f(\Omega_A): [0,1]^k\rightarrow\mathbb{R}$ is monotonically increasing if it is monotonically increasing in each variable $\omega_i$, i.e.,
\begin{equation*}
\omega'_i>\omega_i \Longrightarrow f(\omega_1, \cdots, \omega'_i, \cdots, \omega_k)>f(\omega_1, \cdots, \omega_i, \cdots, \omega_k), \quad \forall i\le k.
\end{equation*}
\end{axiom}

The above axioms are the intuitive with Axiom~\ref{axiom:symmetry} stating that once the sensing set $\cal A$ is given, the sensing order will not change the final reward under a symmetrical function $f$. The following axiom, however, significantly extends the axiom of decomposability in~\cite{Wang11TPS} so as to cover a much larger range of utility functions.

\begin{axiom}[$g$-Decomposability]
\label{axiom:decomposability}
A function $f(\Omega_A): [0,1]^k\rightarrow\mathbb{R}$ is decomposable if there exists a continuous and increasing function $g:[0,1]\rightarrow[0,\infty)$ and a constant $c$ such that for any $i\le k$ it holds that
\begin{multline*}
f(\omega_{1}, \cdots, \omega_{i-1}, \omega_i, \omega_{i+1}, \cdots, \omega_k) = c\cdot g(\omega_i) f(\omega_{1}, \cdots, \omega_{i-1}, 1, \omega_{i+1}, \cdots, \omega_k)  \\
        + c\cdot (1-g(\omega_i))f(\omega_{1}, \cdots, \omega_{i-1}, 0, \omega_{i+1}, \cdots, \omega_k).
\end{multline*}
\end{axiom}

Axiom~\ref{axiom:decomposability} on the $g$-decomposability states that $f({\Omega_A})$ can always be decomposed into two terms by introducing the function $g$ and replacing $\omega_i$ by $0$ and $1$, respectively. It is insightful to note Axiom of $g$-decomposability significantly extends Axiom of decomposability in~\cite{Wang11TPS} by covering a much larger range of utility functions which cannot be covered by latter, particularly the logarithmic function (e.g., $f(\Omega_A)=\sum_{i=1}^k \log_{a}(1+\omega_i)$ ($a>1$), where $c=\frac{1}{\log_{2}a}$, $g(\omega_i)=\log_{2}(1+\omega_i)$ ) and the power function (e.g., $f(\Omega_A)=\sum_{i=1}^k \omega_i^a, a>0$, where $c=1$, $g(\omega_i)=\omega_i^a$) that are widely used in engineering problems. By setting $g(\omega_i)=\omega_i$ and $c=1$, Axiom~\ref{axiom:decomposability} degenerates to the Axiom of decomposability in~\cite{Wang11TPS}.

In the following, we use the above axioms to characterize a family of generic functions, referred to as \emph{$g$-regular} functions, defined as follows.

\begin{definition}[$g$-Regular Function]
A function is called $g$-regular if it satisfies all the three axioms.
\end{definition}

If the expected reward function $F$ is $g$-regular, the myopic sensing policy, defined in Definition 1, consists of sensing the $k$ channels with the largest belief values. In case of tie, we can sort the channels in tie in the descending order of $\omega_i(t+1)$ calculated in~\eqref{eq:belief_update_err_asym}. The argument is that larger $\omega_i(t+1)$ leads to larger expected payoff in next slot $t+1$. If the tie persists, then the channels are sorted by their indexes.

\section{Analysis on Optimality of Myopic Sensing Policy under Imperfect Sensing}
\label{sec:analysis}

In this section, we establish the closed-form conditions under which the myopic sensing policy achieves the system optimum under imperfect sensing. To this end, we set up by defining a pseudo value function and studying its structural properties which are then used to establish the main result on the optimality.

\subsection{Pseudo Value Function}
\label{sec:property_value_function}

Armed with the three axioms, this section first defines the \emph{pseudo value function} in the imperfect sensing case and then derives several fundamental properties of it, which are crucial in the study on the optimality of the myopic sensing policy. We start by giving the formal definition of the pseudo value function in the recursive form.%\footnote{To make the following presentation more convenient, we sort $\Omega(t)$ for each slot $t$ in the descending order such that $\omega_1(t)\ge\omega_2(t)\ge\cdots\ge\omega_N(t)$}.

\begin{definition}[Pseudo Value Function]
The pseudo value function, denoted as $W_t(\Omega_A(t))$ ($1\leq t\leq T$, $t+1 \leq r\leq T$) is recursively defined as follows:
\begin{equation}
%\label{eq:w_T_err_sym}
\begin{cases}
W_T(\Omega(T))= F(\Omega_{\widetilde{A}}(T)); \\
W_r(\Omega(r))= F(\Omega_{\widetilde{A}}(r))+  \beta \sum_{{\mathcal E}\subseteq{\mathcal{\widetilde{A}}(r)}} Pr(\mathcal{\widetilde{A}}(r), {\mathcal E})W_{r+1}(\Omega_{{\mathcal E}}(r+1)); \\
W_t(\Omega_A(t))= F(\Omega_{A}(t))+  \beta \underbrace{\sum_{{\mathcal E}\subseteq{\mathcal{A}(t)}} Pr(\mathcal{A}(t), {\mathcal E})W_{t+1}(\Omega_{{\mathcal E}}(t+1))}_{\Gamma(\Omega_A(t))},
\end{cases}
\label{eq:w_t}
\end{equation}
where $\Omega_{{\mathcal E}}(t+1)$ and $\Omega_{{\mathcal E}}(r+1)$ are generated by $\langle\Omega(t),\mathcal{A}(t),\mathcal{E}\rangle$ and $\langle\Omega(r),\mathcal{\widetilde{A}}(r),\mathcal{E}\rangle$, respectively, according to~\eqref{eq:belief_update_err_asym}, and $\displaystyle Pr({\cal M},{\mathcal E})\triangleq\prod_{i\in{\mathcal E}}(1-\epsilon)\omega_{i}(t)\prod_{j\in{\cal M}\setminus{\mathcal E}}[1-(1-\epsilon)\omega_{j}(t)]$.
\end{definition}

The pseudo value function gives the expected discounted accumulated reward of the following sensing policy: in slot $t$ sense the channels in $\mathcal{A}(t)$ and then sense the channels in $\mathcal{\widetilde{A}}(r) ~(t+1 \leq r\leq T)$ (i.e., adopt the myopic policy from slot $t+1$ to $T$). If $\mathcal{A}(t)=\mathcal{\widetilde{A}}(t)$, then the above sensing policy is the myopic sensing policy with $W_t(\Omega_A(t))$ being the total reward from slot $t$ to $T$.

\begin{lemma}
\label{lemma:symmetry_asym_err}
If the expected reward function $F(\Omega_A)$ is $g$-regular, the correspondent pseudo value function $W_t(\Omega_A(t))$ is symmetrical about $\omega_i,\omega_j$ where $i, j\in{\cal A} \text{ or }i, j \notin{\cal A}$ for all $t=1, 2, \cdots, T$.
% i.e.,
%\begin{equation*}
%W_t(\omega_{1}, \cdots, \omega_i, \cdots, \omega_j, \cdots, \omega_N) = W_t(\omega_{1}, \cdots, \omega_j, \cdots, \omega_i, \cdots, \omega_N).
%\end{equation*}
\end{lemma}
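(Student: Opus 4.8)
The plan is to argue by backward induction on the slot index $t$, from $t=T$ down to $t=1$, and---this is the crucial design choice---to strengthen the inductive hypothesis. Rather than only tracking the stated both-inside/both-outside symmetry, I would prove that the fully myopic value function appearing in~\eqref{eq:w_t} (the branch $W_r$ with $r\ge t+1$) is invariant under interchanging \emph{any} two channels, carrying each channel's belief together with its transition statistics $p^{\cdot}_{01},p^{\cdot}_{11}$. This stronger statement is natural because the myopic selection, the reward $F$, the weights $Pr(\cdot,\cdot)$, and the update~\eqref{eq:belief_update_err_asym} all depend on the channels only through the \emph{multiset} of their (belief, parameter) pairs, not through their labels: the coordinatewise update is built from the channel-attached operators $\tau_i$ (affine by Lemma~\ref{lemma:property_tau_err_sym}) and $\varphi$, so it commutes with any relabeling, and the selection rule depends only on the belief values. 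The base case $t=T$ is immediate, since $W_T=F(\Omega_{\widetilde{A}}(T))$ is symmetric by Axiom~\ref{axiom:symmetry}.

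For the inductive step I would write $W_t(\Omega_A(t))=F(\Omega_A(t))+\beta\,\Gamma(\Omega_A(t))$ with $\Gamma(\Omega_A(t))=\sum_{\mathcal{E}\subseteq\mathcal{A}(t)}Pr(\mathcal{A}(t),\mathcal{E})\,W_{t+1}(\Omega_{\mathcal{E}}(t+1))$ and treat a swap of two channels $i,j$ with $i,j\in\mathcal{A}(t)$ or $i,j\notin\mathcal{A}(t)$. The immediate-reward term is handled directly: if both are in $\mathcal{A}(t)$ the invariance is Axiom~\ref{axiom:symmetry}, while if both are outside $\mathcal{A}(t)$ the term $F(\Omega_A(t))$ does not depend on $\omega_i,\omega_j$ at all. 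The work is therefore concentrated in showing that $\Gamma$ is unchanged by the swap.

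The device I would use for $\Gamma$ is the involution $\sigma$ on the observation subsets $\mathcal{E}\subseteq\mathcal{A}(t)$ that exchanges the membership of $i$ and $j$ (so $\sigma(\mathcal{E})=\mathcal{E}$ precisely when $i,j$ are jointly in or jointly out of $\mathcal{E}$). From the definition $Pr(\mathcal{A}(t),\mathcal{E})=\prod_{l\in\mathcal{E}}(1-\epsilon)\omega_l\prod_{m\in\mathcal{A}(t)\setminus\mathcal{E}}[1-(1-\epsilon)\omega_m]$, a one-line check shows that performing the swap sends $Pr(\mathcal{A}(t),\mathcal{E})$ to $Pr(\mathcal{A}(t),\sigma(\mathcal{E}))$; after re-indexing the sum by the bijection $\sigma$ the weights are restored. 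It then remains to compare the value terms. Going branch by branch through~\eqref{eq:belief_update_err_asym}---the $ACK=1$ entries become $p_{11}$, the $ACK=0$ entries become $\tau(\varphi(\cdot))$, and the passive entries become $\tau(\cdot)$---I would verify that the vector obtained by swapping and then updating under $\sigma(\mathcal{E})$ is exactly $\Omega_{\mathcal{E}}(t+1)$ with the coordinates of $i$ and $j$ interchanged, so the strengthened hypothesis equates the two term by term. The case $i,j\notin\mathcal{A}(t)$ is lighter: $Pr(\mathcal{A}(t),\mathcal{E})$ involves neither $\omega_i$ nor $\omega_j$, both channels merely evolve passively through $\tau_i,\tau_j$ (Lemma~\ref{lemma:property_tau_err_sym}), and the interchange again reduces to a coordinate swap covered by the hypothesis.

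I expect the main obstacle to be this matching of the updated belief vectors when $i,j\in\mathcal{A}(t)$, precisely because the $ACK=0$ branch maps $\omega_i$ to the nonlinear, observation-dependent image $\tau_i(\varphi(\omega_i))$ rather than to a linear one, so $\sigma$ must be chosen to pair the $ACK=1$, $ACK=0$, and passive coordinates consistently after the interchange. The second, more conceptual difficulty---and my reason for strengthening the hypothesis---is that after one transition the two channels may land on opposite sides of the next myopic threshold, i.e.\ one inside and one outside $\widetilde{A}(t+1)$; an action-restricted symmetry would not apply to such terms, whereas full relabeling-invariance of the myopic value function does, since for a straddling swap the myopic rule simply re-selects the same multiset of channels and all downstream quantities are label-independent. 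I would therefore carry that stronger invariant through the induction, handling ties by the index-based convention fixed just after the definition of the $g$-regular function.
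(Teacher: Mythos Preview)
Your proposal follows essentially the same route as the paper: both argue by backward induction on $t$, observing that $F$ is symmetric by Axiom~\ref{axiom:symmetry}, that swapping the roles of $i$ and $j$ (when both lie in $\mathcal{A}$ or both lie outside) produces the same next-slot belief vector, and that from $t+1$ onward the purely myopic branch of~\eqref{eq:w_t} is label-insensitive. The paper compresses all of this into one sentence, while you make the mechanism explicit---the involution $\sigma$ on the observation subsets $\mathcal{E}$ to realign the weights $Pr(\mathcal{A},\mathcal{E})$, and the strengthened hypothesis of full relabeling invariance of $W_r$ for $r\ge t+1$ to handle the straddling case---so your write-up is a faithful fleshing-out of the paper's sketch rather than a different argument.
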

\begin{proof}
The lemma can be easily shown by backward induction noticing that $F(\Omega_A)$ is symmetrical about $\omega_i,\omega_j$, and $(\omega_{1}, \cdots, \omega_i, \cdots, \omega_j, \cdots, \omega_N)$ and $(\omega_{1}, \cdots, \omega_j, \cdots, \omega_i, \cdots, \omega_N)$ generate the same belief vector $\Omega(t+1)$ no matter whether $i, j\in{\cal A} \text{ or }i, j \notin{\cal A}$, combined with the fact that the myopic policy is adopted from slot $t+1$ to $T$ by~\eqref{eq:w_t}, we conclude $W_{t+1}(\Omega_{{\mathcal E}}(t+1))$ is symmetrical about $\omega_i,\omega_j$. Thus the lemma holds.
\end{proof}

%Lemma~\ref{lemma:symmetry_asym_err} further implies that under $g$-regular condition, the pseudo value function is robust against channel permutation such that all channels can be divided into sensed channels or non-sensed channels at each slot. %Hence, it can be defined on the set $\Omega_A$ instead of the channel belief vector $\Omega$, as stated in Corollary~\ref{corollary:robustness_channel_permutation_2}.

%\begin{corollary}[Robustness against Channel Permutation]
%\label{corollary:robustness_channel_permutation_2}
%Let $\Omega_A$ denote any permutation of the belief values of the elements in $\cal A$, if $F(\Omega_A)$ is symmetrical, it holds that $W_t(\Omega_A, \Omega\setminus\Omega_A)$ has a unique value. In other words, $W_t(\Omega)$ is robust against channel permutation (i.e, $W_t(\Omega)$ is a function of $\Omega_A$ and $\Omega\setminus\Omega_A$) and thus can be simply defined as a function of $\Omega_A$ (or $\mathcal{A}$).
%\end{corollary}

\subsection{Myopic Sensing Policy: Condition of Optimality}
\label{sec:optimality_myopic}
In this subsection, we study the optimality of the myopic sensing policy. For the convenience of discussion, we firstly state some notation before presenting the analysis.

\begin{itemize}
\item $\displaystyle p_{11}^{max}\triangleq\max_{i\in{\cal N}} \{p_{11}^i\}$, $\displaystyle p_{01}^{min}\triangleq\max_{i\in{\cal N}} \{p_{01}^i\}$;
\item $\displaystyle \delta_p^{max}\triangleq\max_{i\in{\cal N}} \{p_{11}^i-p_{01}^i\}$, $\displaystyle \delta_p^{min}\triangleq\min_{i\in{\cal N}} \{p_{11}^i-p_{01}^i\}$;
\item $\displaystyle g'_{min}\triangleq\min_{ p^{min}_{01} \leq \omega \leq p^{max}_{11}} \Big\{ \frac{dg(\omega)}{d\omega} \Big\}$, $\displaystyle g'_{max}\triangleq\max_{p^{min}_{01} \leq \omega \leq p^{max}_{11}} \Big\{ \frac{dg(\omega)}{d\omega} \Big\}$;
\item Let $\omega_{-i}\triangleq\{\omega_j:j\in{\cal A},j\ne i\}$ denote the believe vector except $\omega_i$, and
\begin{eqnarray*}
\begin{cases}
\displaystyle \Delta_{max}\triangleq \max_{ \omega_{-i}\in[0,1]^{N-1}} \ \{F(1, \omega_{-i})-F(0, \omega_{-i})\}, \\
\displaystyle \Delta_{min}\triangleq \min_{ \omega_{-i}\in[0,1]^{N-1}} \ \{F(1, \omega_{-i})-F(0, \omega_{-i})\}.
\end{cases}
\end{eqnarray*}
\end{itemize}

We start by showing the following important lemma (Lemma~\ref{lemma:decomposability_asym_err}) and then establish the sufficient condition under which the optimality of the myopic sensing policy is ensured. In Lemma~\ref{lemma:decomposability_asym_err}, we consider $\Omega_l=[\omega_1,\cdots,\omega_l,\cdots,\omega_N]$ and $\Omega'_l=[\omega_1,\cdots,\omega'_l,\cdots,\omega_N]$ which differ only in one element $\omega'_l \ge \omega_l$. Let $\mathcal{A'}$ and $\mathcal{A}$ denote the largest $k$ elements in $\Omega'_l$ and $\Omega_l$, respectively\footnote{The tie, if exists, is resolved in the way as stated in remark after Definition 3}, Lemma~\ref{lemma:decomposability_asym_err} gives the upper and lower bounds of $W_t(\Omega_{A'})-W_t(\Omega_A)$.

\begin{lemma}
\label{lemma:decomposability_asym_err}
If the expected reward function $F(\Omega_A)$ is $g$-regular, $\forall l\in{\cal N}$, $\omega_l\le \omega_l'$ and $1\le t\le T$, we have
\begin{enumerate}
\item if $l\in{\cal A}'$ and $l\in{\cal A}$, then
    \begin{equation*}
    c\cdot(\omega_l'-\omega_l)g'_{min}\Delta_{min} \le W_t(\Omega_{A'})-W_t(\Omega_{A})
    \le c\cdot(\omega_l'-\omega_l)g'_{max}\Delta_{max}\sum_{i=0}^{T-t}\beta^i(\delta_p^{max})^i;
    \end{equation*}
\item  if $l\notin { {\cal A}' }$ and $l\notin{ {\cal A}}$, then $\displaystyle 0 \le W_t(\Omega_{A'})-W_t(\Omega_{A}) \le c\cdot(\omega_l'-\omega_l)g'_{max}\Delta_{max}\sum_{i=1}^{T-t}\beta^i(\delta_p^{max})^i$;
\item if $l\in {\cal A}'$ and $l\notin { {\cal A}}$, then $\displaystyle 0 \le W_t(\Omega_{A'})-W_t(\Omega_{A}) \le c\cdot(\omega_l'-\omega_l)g'_{max}\Delta_{max}\sum_{i=0}^{T-t}\beta^i(\delta_p^{max})^i$.
\end{enumerate}
\end{lemma}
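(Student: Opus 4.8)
The plan is to prove all three bounds \emph{simultaneously by backward induction on $t$}, from $t=T$ down to $t=1$, exploiting the recursive split $W_t(\Omega_A)=F(\Omega_A)+\beta\,\Gamma(\Omega_A)$ furnished by~\eqref{eq:w_t}. For bookkeeping I abbreviate the induction hypothesis by $B_{t+1}\triangleq c\,g'_{max}\Delta_{max}\sum_{i=0}^{T-t-1}\beta^i(\delta_p^{max})^i$; read together, cases~(1)--(3) at level $t+1$ say exactly that changing one belief coordinate by $\eta\ge 0$ raises $W_{t+1}$ by an amount in $[0,\,B_{t+1}\,\eta]$ (and by at least $c\,g'_{min}\Delta_{min}\eta$ when that coordinate is active in both top-$k$ sets). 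In the base case $t=T$ we have $W_T(\Omega_{A'})-W_T(\Omega_A)=F(\Omega_{\widetilde A'})-F(\Omega_{\widetilde A})$; $g$-decomposability (Axiom~\ref{axiom:decomposability}) applied to the distinguished coordinate together with the mean value theorem on $g$ (over the reachable range $[p^{min}_{01},p^{max}_{11}]$) produces a factor $c\,g'(\xi)\in[c\,g'_{min},c\,g'_{max}]$ times $\bigl(F(1,\omega_{-})-F(0,\omega_{-})\bigr)\in[\Delta_{min},\Delta_{max}]$, which is precisely the $i=0$ term of each stated bound. In case~(3) one first invokes the symmetry of $F$ (Axiom~\ref{axiom:symmetry}) to write the two immediate rewards, whose active sets differ by a single swap $m\leftrightarrow l$, as $F(\omega'_l,\omega_-)$ and $F(\omega_m,\omega_-)$ with $\omega_l\le\omega_m\le\omega'_l$.

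For the inductive step the immediate reward $F$ is treated exactly as above and always supplies the $i=0$ term, so everything reduces to bounding $\Gamma(\Omega_{A'})-\Gamma(\Omega_A)$. When $l$ keeps the same active/passive status in both sets (cases~(1) and~(2)), the $Pr(\cdot,\cdot)$ weights of the \emph{other} channels factor out and sum to $1$, so the increment is an average over those channels of the sensitivity of $W_{t+1}$ to channel $l$ alone. For a passive $l$ (case~(2)) this sensitivity is a single contraction through $\tau_l$, of slope $p^l_{11}-p^l_{01}\le\delta_p^{max}$ (Lemma~\ref{lemma:property_tau_err_sym}); since $F$ does not depend on $l$ the immediate term vanishes, which is why the bound starts at $i=1$. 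The delicate case is the \emph{sensed} channel $l$ (case~(1)), whose look-ahead contribution, writing $V(\cdot)=W_{t+1}(\ldots,\cdot,\ldots)$, is $G(\omega)=(1-\epsilon)\omega\,V(p^l_{11})+\bigl(1-(1-\epsilon)\omega\bigr)V(\tau_l(\varphi(\omega)))$. Differentiating, then bounding $V'$ and $V(p^l_{11})-V(\tau_l(\varphi(\omega)))=(p^l_{11}-p^l_{01})(1-\varphi(\omega))\,V'(\cdot)$ by the induction hypothesis, leaves the scalar factor $(1-\epsilon)(1-\varphi(\omega))+\tfrac{\epsilon}{1-(1-\epsilon)\omega}$, which by Lemma~\ref{lemma:property_phi_err_sym} and the explicit form of $\varphi$ collapses to the identity
\begin{equation*}
(1-\epsilon)\bigl(1-\varphi(\omega)\bigr)+\frac{\epsilon}{1-(1-\epsilon)\omega}=1 .
\end{equation*}

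This exact cancellation is, I expect, the main technical obstacle peculiar to imperfect sensing: the \emph{nonlinear} map $\varphi$ (whose slope can exceed $1$) is tamed by its matching $ACK$-probability weight $1-(1-\epsilon)\omega$, so that a sensed channel propagates a belief perturbation with \emph{exactly the same} damping $\delta_p^{max}$ as a passive one. Hence $G'(\omega)\le(p^l_{11}-p^l_{01})B_{t+1}\le\delta_p^{max}B_{t+1}$ and, integrating over $[\omega_l,\omega'_l]$ and averaging over the other channels, $\Gamma(\Omega_{A'})-\Gamma(\Omega_A)\in[0,\,(\omega'_l-\omega_l)\delta_p^{max}B_{t+1}]$. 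The coefficient identity $c\,g'_{max}\Delta_{max}+\beta\delta_p^{max}B_{t+1}=c\,g'_{max}\Delta_{max}\sum_{i=0}^{T-t}\beta^i(\delta_p^{max})^i$ then closes case~(1) from above, nonnegativity of every summand gives the lower bound $c\,g'_{min}\Delta_{min}(\omega'_l-\omega_l)$, and the same geometric bookkeeping (with immediate term $0$) yields case~(2).

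The remaining and, I anticipate, genuinely hard part is case~(3), where raising $\omega_l$ to $\omega'_l$ promotes $l$ into the top-$k$ and expels some $m$, so that $\mathcal A$ and $\mathcal A'$ differ by the swap $m\leftrightarrow l$ with $\omega_l\le\omega_m\le\omega'_l$. My plan is to interpose the pseudo value $\bar W$ evaluated at the belief vector $\Omega'_l$ but under the \emph{non-myopic} action $\mathcal A$ held fixed: the step $W_t(\Omega_A)\to\bar W$ raises only a passive coordinate and is a case~(2) move, contributing an amount in $\bigl[0,\,c(\omega'_l-\omega_l)g'_{max}\Delta_{max}\sum_{i=1}^{T-t}\beta^i(\delta_p^{max})^i\bigr]$. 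It then remains to bound $W_t(\Omega_{A'})-\bar W$, the difference between sensing the higher-belief $l$ and the lower-belief $m$ under the \emph{same} vector; its immediate part is $c\bigl(g(\omega'_l)-g(\omega_m)\bigr)\bigl(F(1,\cdot)-F(0,\cdot)\bigr)\in\bigl[0,\,c\,g'_{max}\Delta_{max}(\omega'_l-\omega_l)\bigr]$ by $g$-decomposability and the symmetry of $F$, supplying the missing $i=0$ term. The crux is the look-ahead comparison under this swap: the two actions induce \emph{different} distributions over the updated $(l,m)$-pair, and because the arms are \emph{non i.i.d.} ($p^l_{rs}\ne p^m_{rs}$ in general) one cannot merely relabel them, while the symmetry of $W_{t+1}$ (Lemma~\ref{lemma:symmetry_asym_err}) applies only to coordinates of equal status. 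I would couple the two outcome distributions coordinate-by-coordinate and invoke Lemma~\ref{lemma:symmetry_asym_err}, monotonicity, and the per-step contraction above to show this term contributes only the already-accounted higher-order mass while remaining nonnegative, thereby closing the induction with the case~(3) sum $\sum_{i=0}^{T-t}\beta^i(\delta_p^{max})^i$ and lower bound $0$.
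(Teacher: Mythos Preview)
Your overall scheme---backward induction, the split $W_t=F+\beta\Gamma$, the $g$-decomposability plus mean-value argument for the immediate term---matches the paper, and your handling of case~(1) isolates precisely the cancellation the paper exploits: the two contributions from the ACK/NACK branches combine, via $(1-\epsilon)\bigl(1-\varphi(\omega)\bigr)+\epsilon/\bigl(1-(1-\epsilon)\omega\bigr)=1$, to leave exactly the factor $(\omega'_l-\omega_l)(p^l_{11}-p^l_{01})$ (compare equations~\eqref{eq:ih1_gamma_dif_1}--\eqref{eq:ih1_gamma_dif_2}). Your case-(2) argument---one $\tau_l$-contraction then the full induction hypothesis at $t{+}1$---is in fact tidier than the paper's, which splits into four sub-cases according to the first time $t^0$ at which $l$ re-enters the active set.

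The genuine gap is case~(3). You interpose at the \emph{non-myopic} pseudo value $\bar W$ (belief $\Omega'_l$, action $\mathcal A$), so the residual $W_t(\Omega_{A'})-\bar W$ compares two \emph{different actions} $\mathcal A'=\mathcal A\setminus\{m\}\cup\{l\}$ and $\mathcal A$ on the same belief vector. The look-ahead under this swap changes the next-slot updates of \emph{both} $l$ and $m$ simultaneously (one goes from sensed to passive, the other from passive to sensed), whereas the induction hypothesis at $t{+}1$ controls only single-coordinate perturbations, and Lemma~\ref{lemma:symmetry_asym_err} only swaps coordinates of equal status. A ``coordinate-by-coordinate coupling'' does not go through for non-i.i.d.\ arms because $\tau_l\ne\tau_m$; in effect you are trying to prove a special case of Lemma~\ref{lemma:bound_myopic_sensing_err} inside the proof of Lemma~\ref{lemma:decomposability_asym_err}, which is circular.

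The paper avoids this by interposing at the \emph{belief} level rather than the action level: set $\omega_l=\omega_m$ and write
\[
W_t(\Omega_{A'})-W_t(\Omega_A)=\bigl[W_t(\omega'_l)-W_t(\omega_l{=}\omega_m)\bigr]+\bigl[W_t(\omega_l{=}\omega_m)-W_t(\omega_l)\bigr].
\]
Both brackets now change only the single coordinate $\omega_l$; the first is a case-(1) move and the second a case-(2) move, each already established at the \emph{current} level $t$. Summing the case-(1) bound $c(\omega'_l-\omega_m)g'_{max}\Delta_{max}\sum_{i=0}^{T-t}\beta^i(\delta_p^{max})^i$ and the case-(2) bound $c(\omega_m-\omega_l)g'_{max}\Delta_{max}\sum_{i=1}^{T-t}\beta^i(\delta_p^{max})^i$ yields the claimed $\sum_{i=0}^{T-t}$ bound with factor $\omega'_l-\omega_l$, and both brackets are nonnegative. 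This reduction to single-coordinate moves is exactly what your action-level interposition loses.
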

\begin{proof}
%The proof consists of an adaptive extension of the proof of Lemma~4 in~\cite{Wang11TPS} by incorporating the imperfect observation on the channel state. The detailed mathematic analysis, which is quite involved, is detailed in~\cite{Wang12}.
The proof is given in the Appendix~\ref{appendix:decomposability_asym_err}.
\end{proof}

\begin{remark}
It can be noted that the case $l\notin {\cal A}'$ and $l\in { {\cal A}}$ is impossible to exist according to the definition of the myopic sensing policy.
\end{remark}

In the following lemma, we consider $W_t(\Omega_{A_l})$ and $W_t(\Omega_{A_m})$ where ${\cal A}_l$ and ${\cal A}_m$ differ in one element ($l\in{\cal A}_l$ and $m\in{\cal A}_m$ and $\omega_l>\omega_m$). Lemma~\ref{lemma:bound_myopic_sensing_err} establishes the sufficient condition under which $W_t(\Omega_{A_l})\geq W_t(\Omega_{A_m})$ when $F$ is $g$-regular.

\begin{lemma}
If $F(\Omega_A)$ is $g$-regular and $\displaystyle \frac{g'_{min}\Delta_{min}}{g'_{max}\Delta_{max}} \geq \sum_{i=1}^{T-1}\beta^i(\delta_p^{max})^i$, then $W_t(\Omega_{A_l}) \geq W_t(\Omega_{A_m})$ holds for $1 \leq t \leq T$.
\label{lemma:bound_myopic_sensing_err}
\end{lemma}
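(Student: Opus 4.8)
My plan is to compare the two candidate actions through the one-slot recursion in~\eqref{eq:w_t} and to show that the immediate advantage of sensing the higher-belief channel $l$ outweighs any loss it may cause in the continuation value. Write $\mathcal{A}_l=\mathcal{S}\cup\{l\}$ and $\mathcal{A}_m=\mathcal{S}\cup\{m\}$, where $\mathcal{S}$ collects the $k-1$ channels common to the two actions and $\omega_{\mathcal S}$ denotes their beliefs. Expanding~\eqref{eq:w_t} gives
\[
W_t(\Omega_{A_l})-W_t(\Omega_{A_m}) = \big[F(\Omega_{A_l})-F(\Omega_{A_m})\big] + \beta\big[\Gamma(\Omega_{A_l})-\Gamma(\Omega_{A_m})\big],
\]
and I would bound the two brackets separately: the immediate term from below by a quantity proportional to $\omega_l-\omega_m$, and the continuation term from below by a (possibly negative) quantity that also vanishes as $\omega_l\downarrow\omega_m$.

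For the immediate term I apply Axiom~\ref{axiom:decomposability} along the single coordinate in which $\mathcal{A}_l$ and $\mathcal{A}_m$ disagree. Since the remaining beliefs $\omega_{\mathcal S}$ are identical, $g$-decomposability yields $F(\Omega_{A_l})-F(\Omega_{A_m}) = c\,[g(\omega_l)-g(\omega_m)]\,[F(1,\omega_{\mathcal S})-F(0,\omega_{\mathcal S})]$. Because $g$ is increasing with derivative at least $g'_{min}$ on the reachable interval $[p^{min}_{01},p^{max}_{11}]$ (Lemma~\ref{lemma:property_tau_err_sym} keeps the updated beliefs there) and $F(1,\cdot)-F(0,\cdot)\ge\Delta_{min}$ by definition, the mean value theorem gives $F(\Omega_{A_l})-F(\Omega_{A_m})\ge c\,g'_{min}\Delta_{min}\,(\omega_l-\omega_m)$. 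This is the profit I must protect.

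The continuation term is the heart of the argument. I would expand $\Gamma(\Omega_{A_l})$ and $\Gamma(\Omega_{A_m})$ over the ACK/NACK outcomes of the sensed channels and exploit the mean-preserving identity $(1-\epsilon)\omega_i\,p^i_{11}+[1-(1-\epsilon)\omega_i]\,\tau_i(\varphi(\omega_i))=\tau_i(\omega_i)$, which says the expected post-sensing belief of a channel equals its passive update $\tau_i(\omega_i)$. This lets me introduce the common reference vector $\bar\Omega$, in which both $l$ and $m$ carry their passive updates $\tau_l(\omega_l),\tau_m(\omega_m)$, so that the ``bulk'' contributions of $\Gamma(\Omega_{A_l})$ and $\Gamma(\Omega_{A_m})$ coincide and cancel. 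What remains are the information-induced deviations of $W_{t+1}$ away from $\bar\Omega$ in the single coordinates $l$ and $m$; coupling the outcome of $l$ under $\mathcal{A}_l$ to that of $m$ under $\mathcal{A}_m$ through a shared uniform variable (legitimate because $(1-\epsilon)\omega_l>(1-\epsilon)\omega_m$) confines the genuine mismatch to an event of probability proportional to $\omega_l-\omega_m$. Invoking Lemma~\ref{lemma:decomposability_asym_err} at slot $t+1$ to bound each single-coordinate deviation of $W_{t+1}$—each coordinate moving by at most $\delta_p^{max}(\omega_l-\omega_m)$ and being amplified by at most $\sum_{i=0}^{T-t-1}\beta^i(\delta_p^{max})^i$—together with the symmetry of $W_{t+1}$ from Lemma~\ref{lemma:symmetry_asym_err} to line up the $l$- and $m$-perturbations, I expect to reach $\beta\,[\Gamma(\Omega_{A_l})-\Gamma(\Omega_{A_m})]\ge -\,c\,g'_{max}\Delta_{max}\,(\omega_l-\omega_m)\sum_{i=1}^{T-t}\beta^i(\delta_p^{max})^i$.

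Combining the two bounds, and using $\sum_{i=1}^{T-t}\beta^i(\delta_p^{max})^i\le\sum_{i=1}^{T-1}\beta^i(\delta_p^{max})^i$, gives
\[
W_t(\Omega_{A_l})-W_t(\Omega_{A_m})\ge c\,(\omega_l-\omega_m)\Big[g'_{min}\Delta_{min}-g'_{max}\Delta_{max}\sum_{i=1}^{T-1}\beta^i(\delta_p^{max})^i\Big],
\]
which is non-negative exactly under the hypothesis $g'_{min}\Delta_{min}/(g'_{max}\Delta_{max})\ge\sum_{i=1}^{T-1}\beta^i(\delta_p^{max})^i$. I expect the main obstacle to be the continuation term: sensing $l$ versus $m$ observes \emph{different} channels, so the two belief vectors entering $W_{t+1}$ differ in two coordinates and the myopic rule at $t+1$ may sort $l$ and $m$ into different positions. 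The delicate point—sharpened by the non-i.i.d. assumption, since $\tau_l$ and $\tau_m$ are distinct maps—is to combine the mean-preserving identity, the coupling, and the symmetry lemma so that the residual reduces to a single-coordinate change of size $O(\delta_p^{max}(\omega_l-\omega_m))$ rather than the crude $O(\omega_l(1-\omega_l)+\omega_m(1-\omega_m))$, and so that the amplification factor is precisely the geometric sum appearing in the hypothesis.
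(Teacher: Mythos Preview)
Your plan attacks the problem at the wrong level: you expand the recursion and try to control $\Gamma(\Omega_{A_l})-\Gamma(\Omega_{A_m})$ at slot $t+1$, which forces you to compare two belief vectors that differ in \emph{both} coordinates $l$ and $m$. As you yourself note, on any coupled outcome the coordinatewise gaps are of order one (e.g.\ on the joint-ACK event $p^l_{11}-\tau_l(\omega_l)=(p^l_{11}-p^l_{01})(1-\omega_l)$), not of order $\omega_l-\omega_m$. The mean-preserving identity you cite is only a first-moment statement and does not pass through the nonlinear map $W_{t+1}$, so it cannot by itself collapse these $O(1)$ deviations; and Lemma~\ref{lemma:symmetry_asym_err} only swaps beliefs between \emph{fixed} channel indices, so in the non-i.i.d.\ setting ($\tau_l\neq\tau_m$) it cannot turn an $m$-perturbation into an $l$-perturbation. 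Without a further idea the residual stays $O(1)$ rather than $O(\omega_l-\omega_m)$, and the bound you ``expect to reach'' is not established.

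The paper avoids this entirely by working at slot $t$ and perturbing only one coordinate. It introduces the auxiliary belief vector $\Omega'$ obtained from $\Omega$ by setting $\omega_l'=\omega_m$ (channel $m$ is left untouched), and writes
\[
W_t(\Omega_{A_l})-W_t(\Omega_{A_m})=\bigl[W_t(\Omega_{A_l})-W_t(\Omega')\bigr]-\bigl[W_t(\Omega_{A_m})-W_t(\Omega')\bigr].
\]
Each bracket is now a single-coordinate change in $\omega_l$ at slot $t$, so Lemma~\ref{lemma:decomposability_asym_err} applies directly there rather than at $t+1$: case~1 (with $l$ in the sensed set) gives the lower bound $c(\omega_l-\omega_m)g'_{min}\Delta_{min}$ for the first bracket, and case~2 (with $l$ outside the sensed set) gives the upper bound $c(\omega_l-\omega_m)g'_{max}\Delta_{max}\sum_{i=1}^{T-t}\beta^i(\delta_p^{max})^i$ for the second. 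Subtracting yields the claim in three lines. In short, the missing idea is to change the \emph{belief} of channel $l$ rather than to chase the change of \emph{action} through the recursion.
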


\begin{proof}
Let $\Omega'$ denote the set of channel belief values with $\omega_l'=\omega_m$ and $\omega_i'=\omega_i$ for $\forall i\ne l$, apply Lemma~\ref{lemma:decomposability_asym_err}, we have
\begin{align*}
&W_t(\Omega_{A_l})-W_t(\Omega_{A_m})=[W_t(\Omega_{A_l})-W_t(\Omega')]-[W_t(\Omega_{A_m})-W_t(\Omega')] \\
\geq & c\cdot(\omega_l-\omega_m)g'_{min} \Delta_{min}-c\cdot(\omega_l-\omega_m)g'_{max}\Delta_{max}\sum_{i=1}^{T-t}\beta^i(\delta_p^{max})^i \\
\ge& c\cdot(\omega_l-\omega_m)g'_{max}\Delta_{max}\cdot \left[\frac{g'_{min}}{g'_{max}}\cdot\frac{\Delta_{min}}{\Delta_{max}}-\sum_{i=1}^{T-1}\beta^i(\delta_p^{max})^i\right] \geq 0
\label{eq:aux_proof1}
\end{align*}
if the conditions in the lemma hold.
\end{proof}

\begin{comment}
\begin{remark}
It is insightful to note that the proof of Lemma~\ref{lemma:bound_myopic_sensing} hinges on the fundamental trade-off between \emph{exploitation}, by accessing the channel with the higher estimated good probability (channel $l$ in the proof) based on currently available information (the belief vector) which greedily maximizes the immediate reward (i.e., $F(\Omega(T))$ in the global utility function), and \emph{exploration}, by sensing unexplored and probably less optimal channels (e.g., channel $m$ in the proof) in order to learn and predict the future channel state, thus maximizing the long-term reward (i.e., the second term in the global utility function). If the user is sufficiently short-sighted (i.e., $\beta$ is sufficiently small), exploitation naturally dominates exploration (i.e., the immediate reward overweighs the potential gain in future reward), resulting the better performance of sensing channel $l$ w.r.t. $m$. The main result of the proof consists of quantifying this tradeoff between exploitation and exploration.
\end{remark}
\end{comment}

The following theorem studies the optimality of the myopic sensing policy under imperfect sensing. The proof is similar to that of Theorem~1 in~\cite{Wang11TPS} and is thus omitted here.

\begin{theorem}
\label{theorem:optimal_condition_pos_case}
The myopic sensing policy is optimal if the following two conditions hold: (1) the expected slot reward function $F$ is $g$-regular; (2) $\displaystyle \frac{g'_{min}\Delta_{min}}{g'_{max}\Delta_{max}} \geq \sum_{i=1}^{T-1}\beta^i(\delta_p^{max})^i$.
\end{theorem}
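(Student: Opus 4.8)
The plan is to prove the theorem by backward induction on the slot index $t$, using the pseudo value function $W_t$ defined in~\eqref{eq:w_t} as the bridge between an arbitrary policy and the myopic one. Recall that $W_t(\Omega_A(t))$ is precisely the expected discounted reward of the policy that senses $\mathcal{A}(t)$ in slot $t$ and then follows the myopic policy from slot $t+1$ through $T$. Hence if I can show that, for every slot $t$ and every belief vector, the myopic action $\widetilde{\mathcal{A}}(t)$ maximizes $W_t(\Omega_A(t))$ over all admissible $\mathcal{A}(t)$, then a one-slot-at-a-time replacement argument lets me conclude that the myopic policy is globally optimal for~\eqref{eq:pb_formulation_err_sym}.

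First I would establish the single-slot statement $\widetilde{\mathcal{A}}(t) = \argmax_{\mathcal{A}} W_t(\Omega_A(t))$, which is where Lemma~\ref{lemma:bound_myopic_sensing_err} does the heavy lifting. Given any action $\mathcal{A}$ that differs from the myopic action $\widetilde{\mathcal{A}}$ (the set of the $k$ channels with the largest belief values), there exist channels $m\in\mathcal{A}\setminus\widetilde{\mathcal{A}}$ and $l\in\widetilde{\mathcal{A}}\setminus\mathcal{A}$, and since $l$ belongs to the top $k$ while $m$ does not, $\omega_l\ge\omega_m$. When $\omega_l>\omega_m$, Lemma~\ref{lemma:bound_myopic_sensing_err} guarantees that swapping $m$ out for $l$ does not decrease $W_t$; when $\omega_l=\omega_m$, the symmetry of $W_t$ established in Lemma~\ref{lemma:symmetry_asym_err} shows the swap leaves $W_t$ unchanged. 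Iterating this exchange — a bubble-sort that replaces low-belief sensed channels by high-belief ones one at a time — transforms $\mathcal{A}$ into $\widetilde{\mathcal{A}}$ while keeping $W_t$ non-decreasing, yielding $W_t(\Omega_{\widetilde{A}})\ge W_t(\Omega_A)$ for all $\mathcal{A}$.

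With the single-slot optimality in hand, I would run the backward induction. The base case $t=T$ is immediate: the only reward is the terminal immediate reward, which the myopic action maximizes by its very definition. For the inductive step, assume the myopic policy is optimal from slot $t+1$ onward, so that the optimal value function $V_{t+1}$ coincides with the myopic value appearing inside $W_t$. Then the dynamic-programming recursion gives $V_t(\Omega)=\max_{\mathcal{A}}\left[F(\Omega_A)+\beta\sum_{\mathcal{E}}Pr(\mathcal{A},\mathcal{E})V_{t+1}(\Omega_{\mathcal{E}})\right]=\max_{\mathcal{A}}W_t(\Omega_A)$, and by the single-slot statement just proved this maximum is attained at $\mathcal{A}=\widetilde{\mathcal{A}}$. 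Hence the myopic action is optimal in slot $t$ as well, closing the induction.

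I expect the main obstacle to be the exchange/sorting argument rather than the induction itself: one must verify that every intermediate action produced by the bubble-sort differs from its successor in exactly one element and respects the belief ordering required by Lemma~\ref{lemma:bound_myopic_sensing_err}, and that ties are resolved consistently through the symmetry of Lemma~\ref{lemma:symmetry_asym_err} together with the tie-breaking convention adopted after the definition of the $g$-regular function. The genuinely hard analytic work — quantifying the exploitation--exploration trade-off via the bounds of Lemma~\ref{lemma:decomposability_asym_err} and the ratio condition $\frac{g'_{min}\Delta_{min}}{g'_{max}\Delta_{max}}\ge\sum_{i=1}^{T-1}\beta^i(\delta_p^{max})^i$ — has already been absorbed into Lemma~\ref{lemma:bound_myopic_sensing_err}, so the theorem reduces to this combinatorial assembly, which is exactly why the proof parallels that of Theorem~1 in~\cite{Wang11TPS}.
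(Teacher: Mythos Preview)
Your proposal is correct and follows essentially the same approach as the paper, which omits the proof and simply refers to Theorem~1 in~\cite{Wang11TPS}: the backward induction on $t$ combined with the one-channel exchange argument driven by Lemma~\ref{lemma:bound_myopic_sensing_err} is exactly the intended machinery. One small remark: invoking Lemma~\ref{lemma:symmetry_asym_err} to handle the tie $\omega_l=\omega_m$ is not quite the right tool, since that lemma concerns swaps within ${\cal A}$ or within its complement; instead, the closure of Lemma~\ref{lemma:bound_myopic_sensing_err} (its proof yields $\ge 0$ also when $\omega_l=\omega_m$) already covers the equality case.
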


Theorem~\ref{theorem:optimal_condition_pos_case} generalizes the results with perfect sensing (Theorem~1 in our previous work~\cite{Wang11TPS}) in two aspects. First, with the more generic axiom on the decomposability of the expected slot reward function, the result can now cover a much larger class of reward functions including the logarithmic and power functions which are widely encountered in practical scenarios. Secondly, Theorem~\ref{theorem:optimal_condition_pos_case} also generalizes the optimality of myopic sensing policy to cover the imperfect sensing case. %It is insightful to notice that despite of the non-linearity of the system dynamic introduced by the sensing error, the optimality conditions in Theorem~\ref{theorem:optimal_condition_pos_case} is independent to the sensing error $\epsilon$, which makes our result more robust than the existing work~\cite{Wang11TPS} and~\cite{Kliu10}.

The following theorem further establishes the optimality conditions in asymptotic case $T\rightarrow \infty$. The proof follows straightforwardly from Theorem~\ref{theorem:optimal_condition_pos_case} by noticing that $\sum_{i=1}^{\infty}x^i=x/(1-x)$ for any $x\in(0,1)$.
\begin{theorem}
\label{theorem:optimal_condition_pos_case_inf}
In the infinite horizon case $T\rightarrow \infty$, the myopic sensing policy is optimal if the following conditions hold: (1) the expected slot reward function $F$ is $g$-regular; (2) $\displaystyle \beta \leq \frac{g'_{min}\Delta_{min}}{(g'_{min}\Delta_{min}+g'_{max}\Delta_{max})\delta_p^{max}}$.
\end{theorem}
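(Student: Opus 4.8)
The plan is to reduce the infinite-horizon claim to the finite-horizon optimality already established in Theorem~\ref{theorem:optimal_condition_pos_case}, and then carry out the geometric-series evaluation anticipated in the remark preceding the statement. First I would observe that the sufficient condition of Theorem~\ref{theorem:optimal_condition_pos_case}, namely $\frac{g'_{min}\Delta_{min}}{g'_{max}\Delta_{max}} \geq \sum_{i=1}^{T-1}\beta^i(\delta_p^{max})^i$, has a right-hand side that is nondecreasing in $T$, being a partial sum of nonnegative terms. Consequently, the most stringent requirement over all finite horizons is obtained by letting $T\rightarrow\infty$: if the inequality survives this limit, then it holds verbatim for every finite $T$, so the stationary myopic policy is simultaneously optimal on all finite horizons.

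Next I would evaluate the limiting sum. Writing $x\triangleq\beta\,\delta_p^{max}$, the series becomes $\sum_{i=1}^{\infty}x^i$, and provided $x<1$ this is the convergent geometric series $x/(1-x)$. The limiting condition then reads $\frac{g'_{min}\Delta_{min}}{g'_{max}\Delta_{max}}\geq\frac{\beta\,\delta_p^{max}}{1-\beta\,\delta_p^{max}}$. Cross-multiplying by the positive quantity $1-\beta\,\delta_p^{max}$ and collecting the $\beta$ terms reduces this to $g'_{min}\Delta_{min}\geq\beta\,\delta_p^{max}\,(g'_{min}\Delta_{min}+g'_{max}\Delta_{max})$, which is precisely condition~(2) of the theorem upon solving for $\beta$. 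Since $g'_{min}\Delta_{min}/(g'_{min}\Delta_{min}+g'_{max}\Delta_{max})<1$ and $\delta_p^{max}\leq 1$, condition~(2) forces $\beta\,\delta_p^{max}<1$, which retroactively validates the convergence assumption $x<1$ used above; thus the argument is self-consistent and no circularity arises.

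The one conceptual step beyond routine algebra is the passage from ``optimal on every finite horizon'' to ``optimal on the infinite horizon.'' Here I would exploit the fact that the myopic rule is stationary (it always senses the $k$ channels of largest belief and does not reference $T$), so the same policy is being compared against all competitors at every horizon. Since $\beta<1$ and the per-slot reward is bounded (the belief values lie in $[0,1]$ and $F$ is continuous on a compact domain), the infinite-horizon discounted objective in~\eqref{eq:pb_formulation_err_sym} is the limit of its finite-horizon truncations, with the tail dominated by the convergent series $\sum_i\beta^i$. The myopic value therefore converges to the infinite-horizon myopic value, and for any alternative policy the finite-horizon dominance of the myopic policy passes to the limit, yielding optimality as $T\rightarrow\infty$. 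I expect this limiting/continuity justification, rather than the algebra, to be the main (though mild) obstacle, since it is the only place where one must argue that the finite-horizon dominance of the value functions is preserved under the limit, rather than merely that the scalar optimality condition converges.
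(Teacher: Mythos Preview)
Your proposal is correct and follows the same approach as the paper, which simply states that the result ``follows straightforwardly from Theorem~\ref{theorem:optimal_condition_pos_case} by noticing that $\sum_{i=1}^{\infty}x^i=x/(1-x)$ for any $x\in(0,1)$.'' Your write-up is in fact more careful than the paper's, since you explicitly verify the self-consistency condition $\beta\,\delta_p^{max}<1$ and justify the passage from finite to infinite horizon, both of which the paper leaves implicit.
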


\subsection{Discussion}

%For the technical perspective, compared to~\cite{Wang11TPS}, we extend the third axiom in~\cite{Wang11TPS} to cover a much larger class of reward functions including the logarithmic and power functions. Moreover, the imperfect sensing leads to the non-linearity of system dynamic update, and thus the closed-form conditions of the optimality are non-trivially derived in the more general form. In essence, the RMAB in imperfect sensing is the same problem with~\cite{Wang11TPS} (RMAB with perfect sensing). This can be explained as follows: Although the sensing error brings about the non-linearity of the system dynamic update, this kind of non-linearity does not change the \emph{decomposability} characteristics~\cite{Wang11TCOM,Wang11TPS} of the value function which is the cornerstone to derive the closed-formed condition of optimality. Therefore, the RMAB in~\cite{Wang11TPS} is isomorphic with our considered RMAB in this paper from the perspective of \emph{decomposability}.

We consider the channel access problem where a user is limited to sense $k$ of $N$ i.i.d. channels and gets one unit of reward if the sensed channel is in the good state, i.e., the utility function can be formulated as $F(\Omega_A)=\sum_{i\in A}[(1-\epsilon)\omega_i]$. To that end, we apply Theorem 1 of~\cite{Wang11TCOM} and have $\Delta_{min}=\Delta_{max}=1-\epsilon$. We can then verify that when $\epsilon <\frac{p_{01}(1-p_{11})}{P_{11}(1-p_{01})}$, it holds that $\frac{\Delta_{min}}{\Delta_{max}\big[(1-\epsilon)(1-p_{01}) + \frac{\epsilon(p_{11}-p_{01})}{1-(1-\epsilon)(p_{11}-p_{01})}\big]}>1$. Therefore, when the condition 1 and 2 of Theorem 1 in~\cite{Wang11TCOM} hold, the myopic sensing policy is always optimal for any $\beta$, which significantly extends the results obtained in~\cite{Kliu10}.
Regarding the similar scenario with non i.i.d. channels, we have $c=1$, $g(\omega)=\omega$ and $\Delta_{min}=\Delta_{max}=1-\epsilon$, and furthermore know that the myopic policy is optimal for any $\beta$ and $\epsilon$ if $\delta_p^{max} \leq 0.5$ according to Theorem~\ref{theorem:optimal_condition_pos_case_inf}. Compared to the optimal conditions~\cite{Wang11TCOM} for i.i.d. channels, although all focusing on the optimality of the myopic policy, the closed-form conditions of optimality derived in this paper are much stricter with respect to the transmission probabilities ($\delta_p^{max} \leq 0.5$ in our paper) but much looser in false alarm rate ($\epsilon <\frac{p_{01}(1-p_{11})}{P_{11}(1-p_{01})}$ in~\cite{Wang11TCOM}). The stricter constraint on the transmission probabilities is due to the proposed method itself which sacrifices part of the optimality to cover the case of non i.i.d. channels, while the looser constraint on the sensing error comes from the fact that all the channels are only discriminated as sensed channels or non-sensed channels at each slot under which the sensing error can be absorbed without any constraint.

\section{Conclusion}
\label{sec:conclusion}

We have investigated the optimality of the myopic policy in the RMAB problem with imperfect sensing, %which is of fundamental importance in many engineering applications. We have
and developed three axioms characterizing a family of generic and practically important functions which we refer to as $g$-regular functions. By performing a mathematical analysis based on the developed axioms, we have characterized the closed-form conditions under which the optimality of the myopic policy is guaranteed.
%The application of the derived results is demonstrated by analyzing a class of RMAB problems arising from multi-channel opportunistic access.
As future work, a natural direction we are pursuing is to investigate the RMAB problem with multiple players with potentially conflicts among them and to study the structure and the optimality of the myopic policy in that context.

\appendices

\section{Proof of Lemma~\ref{lemma:decomposability_asym_err}}
\label{appendix:decomposability_asym_err}

We prove the lemma by backward induction.

For slot $T$, noticing that $W_T(\Omega_A)=F(\Omega_A)$ and that $g'_{min}\le \frac{g(\omega)-g(\omega')}{\omega-\omega'}\le g'_{max}$ for any $p_{01}^{min}\le \omega'\le \omega\le p_{11}^{max}$, we have
\begin{enumerate}
\item For $l\in{\cal A}', $ $l\in{\cal A}$, it holds that
\begin{equation*}
c\cdot (\omega_l'-\omega_l)g'_{min}\Delta_{min} \le W_T(\Omega_{A'}) - W_T(\Omega_A) \le c\cdot[g(\omega_l')-g(\omega_l)]\Delta_{max}  \le c\cdot(\omega_l'-\omega_l)g'_{max}\Delta_{max};
\end{equation*}
\item For $l\notin{{\cal A}'}$, it holds that $l\notin{ {\cal A}}$, $W_T(\Omega_{A'}) - W_T(\Omega_A) = 0$;
\item For $l\in {\cal A}'$, $l\notin{{\cal A}}$, it exists at least one channel $m$ such that $\omega'_l\geq \omega_m \geq \omega_l$. It then holds that
\begin{multline*}
    0 \leq  c\cdot(\omega_l'-\omega_l)g'_{min}\Delta_{min} \leq W_T(\Omega_{A'}) - W_T(\Omega_A) \leq c\cdot[g(\omega_l')-g(\omega_m)]\Delta_{max} \\  \leq  c\cdot[g(\omega_l')-g(\omega_l)]\Delta_{max} \le c\cdot(\omega_l'-\omega_l)g'_{max}\Delta_{max};
\end{multline*}
\end{enumerate}
Therefore, Lemma~\ref{lemma:decomposability_asym_err} holds for slot $T$.

Assume that Lemma~\ref{lemma:decomposability_asym_err} holds for $T, \cdots, t+1$. We now prove the lemma for slot $t$.

\textbf{We first prove the first case: $l\in{{\cal A}'}$ and $l\in{\cal A}$}.
By rewriting $\Gamma(\Omega_A(t))$ in~\eqref{eq:w_t} and developing $\omega_l(t+1)$ in $\Omega(t+1)$ , we have:
\begin{eqnarray}
\label{eq:Gamma_1}
&&\Gamma(\Omega_{A'})=(1-\epsilon)\omega_l'(t) \Gamma(\Omega_{A'}^1) + (1-(1-\epsilon)\omega_l'(t)) \Gamma(\Omega_{A'}^{\varphi(\omega'_l)}) \\
&&\Gamma(\Omega_{A})=(1-\epsilon)\omega_l(t) \Gamma(\Omega_{A}^1) + (1-(1-\epsilon)\omega_l(t)) \Gamma(\Omega_{A}^{\varphi(\omega_l)})
\label{eq:Gamma_2}
\end{eqnarray}
where, $\Omega_{A'}^1$ and $\Omega_{A'}^{\varphi(\omega'_l)}$ denote $\Omega_{A'}$ with $\omega'_l(t)=1$ and $\varphi(\omega'_l)$ , respectively, while $\Omega_{A}^1$ and $\Omega_{A}^{\varphi(\omega_l)}$ denote $\Omega_{A}$ with $\omega_l(t)=1$ and $\varphi(\omega_l)$, respectively.

Noticing $\Omega_{A'}^1=\Omega_{A}^1$, we have
\begin{align*}
\Gamma(\Omega_{A'})-\Gamma(\Omega_{A})=&(1-\epsilon)(\omega_l'(t)-\omega_l(t))[\Gamma(\Omega_{A'}^1)-\Gamma(\Omega_{A'}^{\varphi(\omega'_l)})]  \\
& +(1-(1-\epsilon)\omega_l(t)) [\Gamma(\Omega_{A'}^{\varphi(\omega'_l)})-\Gamma(\Omega_{A}^{\varphi(\omega_l)})]
\end{align*}

Considering the whole realization of the belief vector, we further have
\begin{align}
\Gamma(\Omega_{A'}(t)) - \Gamma(\Omega_{A}(t))
%=&  \sum_{{\mathcal E}\subseteq{\cal A}(t)} \prod_{i\in{\mathcal E}}(1-\epsilon)\omega_i(t)\prod_{j\in{\cal A}(t)\setminus{\mathcal E}} [1-(1-\epsilon)\omega_j(t)] (\Gamma(\Omega_{A'})-\Gamma(\Omega_{A})) \nonumber \\
=& \sum_{{\mathcal E}\subseteq{\cal A}(t)\setminus{\{l\}}} \prod_{i\in{\mathcal E}}(1-\epsilon)\omega_i(t)\prod_{j\in{\cal A}(t)\setminus{\mathcal E}\setminus{\{l\}}} [1-(1-\epsilon)\omega_j(t)] \cdot \nonumber \\
&\Big\{ (1-\epsilon)(\omega_l'(t)-\omega_l(t))[W_{t+1}(\Omega_{l=1}(t+1))-W_{t+1}(\Omega_{l=\varphi(\omega'_l)}(t+1))] \nonumber \\
&+ (1-(1-\epsilon)\omega_l(t))[W_{t+1}(\Omega_{l=\varphi(\omega'_{l})}(t+1))-W_{t+1}(\Omega_{l=\varphi(\omega_{l})}(t+1))] \Big\}
\label{eq:ih1_gamma_dif}
\end{align}
where, $\Omega_{l=a}(t+1)$ ($a\in\{1,\varphi(\omega'_l),\varphi(\omega_l)\}$) denotes the belief vector at slot $t+1$ under $\Omega(t)$ with $\omega_l(t+1)=\tau_l(a)$.

Next, we derive the bound of $W_{t+1}(\Omega_{l=1}(t+1))-W_{t+1}(\Omega_{l=\varphi(\omega'_l)}(t+1))$ through three cases\footnote{It can be noted that the case $l\notin \mathcal{A'}(t+1)$ and $l\in \mathcal{A}(t+1)$ is impossible.}:

\begin{itemize}
\item Case 1: if $l\in \mathcal{A'}(t+1)$ and $l\in \mathcal{A}(t+1)$, according to the induction hypothesis, we have
    \begin{align*}
    0 \leq  c\cdot (p_{11}^l-\tau_l(\varphi(\omega'_l)))g'_{min}\Delta_{min} \leq &W_{t+1}(\Omega_{l=1}(t+1))-W_{t+1}(\Omega_{l=\varphi(\omega'_l)}(t+1)) \\
    \leq  &c\cdot (p_{11}^l-\tau_l(\varphi(\omega'_l)))g'_{max}\Delta_{max}\sum_{i=0}^{T-t-1}\beta^i(\delta_p^{max})^i
    \end{align*}
\item Case 2: if $l\notin \mathcal{A'}(t+1)$ and $l\notin \mathcal{A}(t+1)$, according to the induction hypothesis, we have
    \begin{multline*}
    0 \leq W_{t+1}(\Omega_{l=1}(t+1))-W_{t+1}(\Omega_{l=\varphi(\omega'_l)}(t+1))
        \leq c\cdot(p_{11}^l-\tau_l(\varphi(\omega'_l))) g'_{max}\Delta_{max}\sum_{i=1}^{T-t-1}\beta^i(\delta_p^{max})^i
    \end{multline*}
\item Case 3: if $l\in \mathcal{A'}(t+1)$ and $l\notin \mathcal{A}(t+1)$, according to the induction hypothesis, we have
    \begin{multline*}
    0\leq  W_{t+1}(\Omega_{l=1}(t+1))-W_{t+1}(\Omega_{l=\varphi(\omega'_l)}(t+1))
    \leq c\cdot(p_{11}^l-\tau_l(\varphi(\omega'_l))) g'_{max}\Delta_{max}\sum_{i=0}^{T-t-1}\beta^i(\delta_p^{max})^i
    \end{multline*}
\end{itemize}

Combining the three cases, we obtain
\begin{align}
   0 &\leq W_{t+1}(\Omega_{l=1}(t+1))-W_{t+1}(\Omega_{l=\varphi(\omega'_l)}(t+1)) \nonumber \\
     &\leq c\cdot(p_{11}^l-\tau_l(\varphi(\omega'_l))) g'_{max}\Delta_{max}\sum_{i=0}^{T-t-1}\beta^i(\delta_p^{max})^i \nonumber\\
    &= c\cdot\Big[1-\frac{\epsilon \omega'_l}{1-(1-\epsilon)\omega'_l}\Big](p_{11}^l-p_{01}^l) g'_{max}\Delta_{max}\sum_{i=0}^{T-t-1}\beta^i(\delta_p^{max})^i.
\label{eq:ih1_gamma_dif_1}
\end{align}

According to Lemma~\ref{lemma:property_tau_err_sym} and~\ref{lemma:property_phi_err_sym}, we have $\tau_l(\varphi(\omega'_l)) \geq \tau_l(\varphi(\omega_l))$ when $\omega'_l \geq \omega_l$. Thus we have the bounds of $W_{t+1}(\Omega_{l=\varphi(\omega'_{l})}(t+1))-W_{t+1}(\Omega_{l=\varphi(\omega_{l})}(t+1))$ by the similar induction as follows:
\begin{align}
   0&\leq W_{t+1}(\Omega_{l=\varphi(\omega'_{l})}(t+1))-W_{t+1}(\Omega_{l=\varphi(\omega_{l})}(t+1)) \nonumber\\
   &\leq c\cdot(\tau_l(\varphi(\omega'_l))-\tau_l(\varphi(\omega_l))) g'_{max}\Delta_{max}\sum_{i=0}^{T-t-1}\beta^i(\delta_p^{max})^i  \nonumber \\
   &= c\cdot\frac{\epsilon(\omega'_l-\omega_l)}{[1-(1-\epsilon)\omega'_l][1-(1-\epsilon)\omega_l]} (p_{11}^l-p_{01}^l) g'_{max}\Delta_{max}\sum_{i=0}^{T-t-1}\beta^i(\delta_p^{max})^i.
\label{eq:ih1_gamma_dif_2}
\end{align}

Combining~\eqref{eq:ih1_gamma_dif}, \eqref{eq:ih1_gamma_dif_1} and~\eqref{eq:ih1_gamma_dif_2} and recalling $p_{11}^l-p_{01}^l\leq \delta_p^{max}$, we have
\begin{equation*}
    0 \leq \Gamma(\Omega_{A'}(t)) - \Gamma(\Omega_{A}(t))
    \leq  c\cdot(\omega_l'-\omega_l) \delta_p^{max} g'_{max} \Delta_{max}\sum_{i=0}^{T-t-1}\beta^i(\delta_p^{max})^i.
\end{equation*}

Since $\Gamma(\Omega_{A'}(t))-\Gamma(\Omega_{A}(t))\geq 0$ and $$ c\cdot(\omega_l'-\omega_l)g'_{min}\Delta_{min} \leq  F(\Omega_{A'}(t)) - F(\Omega_{A}(t)) \leq c\cdot(\omega_l'-\omega_l)g'_{max}\Delta_{max},$$ we have
\begin{align*}
 c\cdot(\omega_l'-\omega_l)g'_{min}\Delta_{min} & \leq  W_t(\Omega_{A'}(t)) - W_t(\Omega_{A}(t)) \\
 & =F(\Omega_{A'}(t)) - F(\Omega_{A}(t)) + \beta (\Gamma(\Omega_{A'}(t))-\Gamma(\Omega_{A}(t))) \\
 &\le c\cdot(\omega_l'-\omega_l)g'_{max}\Delta_{max}
  + \beta \cdot c\cdot(\omega_l'-\omega_l) g'_{max} \delta_p^{max} \Delta_{max}\sum_{i=0}^{T-t-1}\beta^i(\delta_p^{max})^i \\
 &= c\cdot(\omega_l'-\omega_l)g'_{max}\Delta_{max}\sum_{i=0}^{T-t}\beta^i(\delta_p^{max})^i.
\end{align*}
We thus complete the proof of the first part ($l\in{{\cal A}'}$ and $l\in{\cal A}$) of Lemma~\ref{lemma:symmetry_asym_err}.

Secondly, \textbf{we prove the second case $l\notin{{\cal A}'}$ and $l\notin{\cal A}$}. To this end, we have:
\begin{eqnarray*}
\begin{cases}
\displaystyle\Gamma(\Omega_{A}(t)) = \sum_{{\mathcal E}\subseteq{\cal A}(t)} \prod_{i\in{\mathcal E}}(1-\epsilon)\omega_i(t)\prod_{j\in{\cal A}(t)\setminus{\mathcal E}}[1-(1-\epsilon)\omega_j(t) ]W_{t+1}(\Omega_l(t+1)) \\
\displaystyle\Gamma(\Omega_{A'}(t)) = \sum_{{\mathcal E}\subseteq{\cal A}(t)} \prod_{i\in{\mathcal E}}(1-\epsilon)\omega_i(t)\prod_{j\in{\cal A}(t)\setminus{\mathcal E}}[1-(1-\epsilon)\omega_j(t)] W_{t+1}(\Omega'_l(t+1))
\end{cases},
\end{eqnarray*}
where $\Omega_l(t+1)$ and $\Omega_l'(t+1)$ are the belief vector for slot $t+1$ generated by $\Omega_A(t)$ and $\Omega_{A'}(t)$ based on the belief update equation~\eqref{eq:belief_update_err_asym}.

We distinguish the following four cases:
\begin{itemize}
\item If channel $l$ is never chosen for $\Omega_{l}(t+1)$ and $\Omega'_{l}(t+1)$ from the slot $t+1$ to the end of time horizon of interest $T$, that is to say, $l\notin \mathcal{A'}(r)$ and $l\notin \mathcal{A}(r)$ for $t+1\le r \le T$,  it is easy to know $\Gamma(\Omega_{A'}(t))-\Gamma(\Omega_{A}(t)) = 0$, furthermore $W_t(\Omega_{A'}(t))-W_t(\Omega_{A}(t)) = 0$;
\item There exists $t^0$ ($t+1\leq t^0 \leq T$) such that $l\notin \mathcal{A'}(r)$ and $l\notin \mathcal{A}(r)$ for $t+1\le r \le t^0-1$ while $l\notin \mathcal{A'}(t^0)$ and $l\in \mathcal{A}(t^0)$. For this case, it holds $\mathcal{A'}(r)=\mathcal{A}(r)$ for $t+1\le r \le t^0-1$ while $\mathcal{A'}(r)$ and $\mathcal{A}(r)$ differ in one element, assume that $m\in\mathcal{A'}(t^0)$ and $m\notin \mathcal{A}(r)$. According to the definition of the myopic policy, it follows $\omega_l(t^0)\geq \omega_m(t^0)$ and $\omega'_l(t^0) \leq \omega_m(t^0)$, which leads to contradiction since $\omega'_l(t+1)=p^l_{11}>\omega_l(t+1)=p^l_{01}$ leads to $\omega'_l(t^0)>\omega_l(t^0)$ following Lemma~\ref{lemma:property_phi_err_sym}. This case is thus impossible to happen;
\item There exists $t^0$ ($t+1\leq t^0 \leq T$) such that $l\notin \mathcal{A'}(r)$ and $l\notin \mathcal{A}(r)$ for $t+1\le r \le t^0-1$ while $l\in \mathcal{A'}(t^0)$ and $l\in \mathcal{A}(t^0)$. For this case, according to the hypothesis ($l\in{{\cal A}'}$ and $l\in{\cal A}$), we have
    \begin{multline*}
    0\le W_{t^0}(\Omega_l'(t^o))-W_{t^0}(\Omega_l(t^o)) \le c\cdot(\omega_l'(t^o)-\omega_l(t^o))g'_{max}\Delta_{max}\sum_{i=0}^{T-t^o}\beta^i(\delta_p^{max})^i \\
    = c\cdot(p_{11}^l-p_{01}^l)^{t^o-t}(\omega_l'(t)-\omega_l(t))g'_{max}\Delta_{max}\sum_{i=0}^{T-t^o}\beta^i(\delta_p^{max})^i.
    \end{multline*}
    Noticing $t^0\geq t+1$, we have
    \begin{equation*}
    0  \leq W_{t+1}(\Omega_{l}'(t+1))-W_{t+1}(\Omega_{l}(t+1))\leq c\cdot(p_{11}^l-p_{01}^l)(\omega_l'(t)-\omega_l(t))g'_{max}\Delta_{max}\sum_{i=0}^{T-t-1}\beta^i(\delta_p^{max})^i.
    \end{equation*}
    Furthermore,
    \begin{align*}
    0 \leq W_{t}(\Omega_{A'}(t))-W_{t}(\Omega_{A}(t)) &=\beta(\Gamma(\Omega_{A'}(t))-\Gamma(\Omega_{A}(t))) \\
     &\leq \beta \cdot c\cdot (p_{11}^l-p_{01}^l)(\omega_l'(t)-\omega_l(t))g'_{max}\Delta_{max}\sum_{i=0}^{T-t-1}\beta^i(\delta_p^{max})^i \\ &\leq  \beta c \delta_p^{max}(\omega_l'(t)-\omega_l(t))g'_{max}\Delta_{max}\sum_{i=0}^{T-t-1}\beta^i(\delta_p^{max})^i \\
    &= c(\omega_l'(t)-\omega_l(t))g'_{max}\Delta_{max}\sum_{i=1}^{T-t}\beta^i(\delta_p^{max})^i.
    \end{align*}
\item There exists $t^0$ ($t+1\leq t^0 \leq T$) such that $l\notin \mathcal{A'}(r)$ and $l\notin \mathcal{A}(r)$ for $t+1\le r \le t^0-1$ while $l\in \mathcal{A'}(t^0)$ and $l\notin \mathcal{A}(t^0)$. For this case, by the induction hypothesis ($l\in{{\cal A}'}$ and $l\notin{\cal A}$), we have
    \begin{multline*}
    0 \leq W_{t^0}(\Omega_l'(t^o))-W_{t^0}(\Omega_l(t^o))
    \leq c\cdot(\omega_l'(t^o)-\omega_l(t^o))g'_{max}\Delta_{max}\sum_{i=0}^{T-t^o}\beta^i(\delta_p^{max})^i \\
    = c\cdot(p_{11}^l-p_{01}^l)^{t^o-t}(\omega_l'(t)-\omega_l(t))g'_{max}\Delta_{max}\sum_{i=0}^{T-t^o}\beta^i(\delta_p^{max})^i.
    \end{multline*}
    Noticing that $t+1 \leq t^o$, we have
    \begin{multline*}
    0 \leq W_{t+1}(\Omega_{l}'(t+1))-W_{t+1}(\Omega_{l}(t+1))
     \leq c\cdot(\omega_l'(t)-\omega_l(t))(p_{11}^l-p_{01}^l)g'_{max}\Delta_{max}\sum_{i=0}^{T-t-1}\beta^i(\delta_p^{max})^i.
    \end{multline*}
    Therefore, we have
    \begin{align*}
0 &\leq W_{t}(\Omega_{A'}(t))-W_{t}(\Omega_{A}(t)) \\
& \leq \beta \cdot c\cdot(\omega_l'(t)-\omega_l(t))(p_{11}^l-p_{01}^l)g'_{max}\Delta_{max}\sum_{i=0}^{T-t-1}\beta^i(\delta_p^{max})^i
    \\
    &\leq \beta\cdot c\cdot(\omega_l'(t)-\omega_l(t))\delta_p^{max}g'_{max}\Delta_{max}\sum_{i=0}^{T-t-1}\beta^i(\delta_p^{max})^i \\
     &= c(\omega_l'(t)-\omega_l(t))g'_{max}\Delta_{max}\sum_{i=1}^{T-t}\beta^i(\delta_p^{max})^i.
    \end{align*}
\end{itemize}

Combining the above results, we complete the proof of the second part ($l\notin{{\cal A}'}$ and $l\notin{\cal A}$) of Lemma~\ref{lemma:symmetry_asym_err}.

Last, \textbf{we prove the third case $l \in {\cal{A}}'(t)$ and $l\notin{{\cal A}(t)}$}. In this case, there must exist a channel $m$ such that $\omega'_l \geq \omega_{m} \geq \omega_l$ and $\omega'_l\in\mathcal{A'}$ and $\omega_m\in\mathcal{A}$. We then have
\begin{align}
% \nonumber to remove numbering (before each equation)
 & W_{t}(\Omega_{A'}(t))-W_{t}(\Omega_{A}(t)) \nonumber \\
  =& W_{t}(\omega_1,\cdots,\omega'_l,\cdots,\omega_N) - W_{t}(\omega_1,\cdots,\omega_l,\cdots,\omega_N) \nonumber \\
  =& W_{t}(\omega_1,\cdots,\omega'_l,\cdots,\omega_N) - W_{t}(\omega_1,\cdots,\omega_l=\omega_m,\cdots,\omega_N) \nonumber \\
  & +W_{t}(\omega_1,\cdots,\omega_l=\omega_m,\cdots,\omega_N)-W_{t}(\omega_1,\cdots,\omega_l,\cdots,\omega_N)
\label{eq:case3}
\end{align}

According to the induction hypothesis ($l\in{{\cal A}'}$ and $l\in{\cal A}$), the first term of the right hand of~\eqref{eq:case3} can be bounded as follows:
\begin{align}
  0 &\leq W_{t}(\omega_1,\cdots,\omega'_l,\cdots,\omega_N) - W_{t}(\omega_1,\cdots,\omega_l=\omega_m,\cdots,\omega_N) \nonumber\\
    &\leq c\cdot(\omega_l'(t)-\omega_{m}(t))g'_{max}\Delta_{max}\sum_{i=0}^{T-t}\beta^i(\delta_p^{max})^i
\label{eq:case3_1}
\end{align}

Meanwhile, the second term of the right hand of~\eqref{eq:case3} is bounded by induction hypothesis ($l\notin{{\cal A}'}$ and $l\notin{\cal A}$) as:
\begin{align}
\label{eq:case3_2}
   0 &\leq W_{t}(\omega_1,\cdots,\omega_l=\omega_m,\cdots,\omega_N)-W_{t}(\omega_1,\cdots,\omega_l,\cdots,\omega_N) \nonumber \\
    &\leq  c\cdot(\omega_{m}(t)-\omega_l(t))g'_{max}\Delta_{max}\sum_{i=1}^{T-t}\beta^i(\delta_p^{max})^i
\end{align}
Therefore, we have, combining~\eqref{eq:case3},~\eqref{eq:case3_1} and ~\eqref{eq:case3_2},
\begin{equation*}
0 \leq W_{t}(\Omega_{A'}(t))-W_{t}(\Omega_{A}(t)) \leq c\cdot(\omega_l'(t)-\omega_l(t))g'_{max}\Delta_{max}\sum_{i=0}^{T-t}\beta^i(\delta_p^{max})^i,
\end{equation*}
which completes the proof of the third part ($l\in{{\cal A}'}$ and $l\notin{\cal A}$) of Lemma~\ref{lemma:symmetry_asym_err}. Lemma~\ref{lemma:symmetry_asym_err} is thus proven.

\bibliographystyle{unsrt}
\bibliography{reference}

\end{document}